\documentclass[letterpaper, 10pt, conference]{ieeeconf}

\IEEEoverridecommandlockouts

\linespread{0.97}
\usepackage{pdfpages}
\usepackage[active]{srcltx}
\usepackage{algorithm}
\usepackage{algpseudocode}
\usepackage{verbatim}
\usepackage{epsfig}
\usepackage{amsmath}
\usepackage{xcolor}
\usepackage{amssymb}
\usepackage{psfrag}
\usepackage[T1]{fontenc}
\usepackage{graphicx}
\usepackage[usenames,dvipsnames]{pstricks}
\usepackage{pst-grad} % For gradients
\usepackage{pst-plot} % For axes

\bibliographystyle{IEEEtran}

%\addtolength{\evensidemargin}{-2.0cm}
%\addtolength{\oddsidemargin}{0cm}
%\addtolength{\textwidth}{+2.0cm}

\graphicspath{{./figure/}}

\newtheorem{theorem}{Theorem}[section]
\newtheorem{lemma}{Lemma}[section]

\newtheorem{remark}{Remark}

\newcommand{\Eps}{\mathcal{E}}

\definecolor{color1}{rgb}{0,0,0}

\title{\LARGE \bf
Exact Topology Learning in a Network of Cyclostationary Processes 
}

%\author{ \parbox{3 in}{\centering Huibert Kwakernaak*
% \thanks{*Use the $\backslash$thanks command to put information here}\\
% Faculty of Electrical Engineering, Mathematics and Computer Science\\
% University of Twente\\
% 7500 AE Enschede, The Netherlands\\
% {\tt\small h.kwakernaak@autsubmit.com}}
% \hspace*{ 0.5 in}
% \parbox{3 in}{ \centering Pradeep Misra**
% \thanks{**The footnote marks may be inserted manually}\\
% Department of Electrical Engineering \\
% Wright State University\\
% Dayton, OH 45435, USA\\
% {\tt\small pmisra@cs.wright.edu}}
%}

\author{Harish Doddi$^1$, Saurav Talukdar$^{1}$, Deepjyoti Deka$^{2}$ and Murti Salapaka$^{3}$ \\% <-this % stops a space
%\thanks{This work was not supported by any organization}% <-this % stops a space
\thanks{$^{1}$Harish Doddi and Saurav Talukdar are with Department of Mechanical Engineering, University of Minnesota, Minneapolis, USA,
{\tt\small doddi003@umn.edu, sauravtalukdar@umn.edu}}%
\thanks{$^{2}$Deepjyoti Deka is with Theory Division, Los Alamos National Laboratory, Los Alamos, USA,
{\tt\small deepjyoti@lanl.gov}}%
\thanks{$^{3}$Muti V. Salapaka are with Department of Electrical and Computer Engineering, University of Minnesota, Minneapolis, USA,
{\tt\small murtis@umn.edu}}
}
\begin{document}
\maketitle
\thispagestyle{empty}
\pagestyle{empty}

%%%%%%%%%%%%%%%%%%%%%%%%%%%%%%%%%%%%%%%%%%%%%%%%%%%%%%%%%%%%%%%%%%%%%%%%%%%%%%%%
\begin{abstract}
Learning the structure of a network from time-series data, in particular cyclostationary data, is of significant interest in many disciplines such as power grids, biology, finance. In this article, an algorithm is presented for reconstruction of the topology of a network of cyclostationary processes. To the best of our knowledge, this is the first work to guarantee exact recovery without any assumptions on the underlying structure. The method is based on a lifting technique by which cyclostationary processes are mapped to vector wide sense stationary processes and further on semi-definite properties of matrix Wiener filters for the said processes. We demonstrate the performance of the proposed algorithm on a Resistor-Capacitor network and present the accuracy of reconstruction for varying sample sizes. 
\end{abstract}

%%%%%%%%%%%%%%%%%%%%%%%%%%%%%%%%%%%%%%%%%%%%%%%%%%%%%%%%%%%%%%%%%%%%%%%%%%%%%%%%
\section{INTRODUCTION}
Networks are a framework used extensively for analysis of the behavior of complex systems like the brain \cite{brainnetwork}, climate \cite{KimNorth1997}, gene regulation dynamics \cite{geneNetwork}, disease spread \cite{diseaseSpread}, power grid \cite{deka2017structure} and others. Network representations help identify essential influence pathways in a complex system, thereby enhance the understanding of the dynamics of complex systems. A problem of interest to multiple communities like control theory, machine learning and signal processing is the inference of the influence pathways among the entities of interest from observation of the entities, which is sometimes referred as structure learning or topology learning \cite{gonccalves2008necessary, drton2017structure, giannakis2018topology}. For example: given a time series collection of stock prices over a time horizon, it is of interest to infer the influence paths among the collection of stocks from the observed time series of stock prices. 

Fundamentally, there exist two different approaches for inference of the influence pathways among the entities from observations. The first is active learning, where a particular entity is perturbed by an external agent and its influence on the rest of the entities is examined \cite{nabi2014network}. However, the disadvantage of such an approach is that it is not always possible to actively excite a specific entity to glean the network structure. For example: it is not always possible to turn off or change generator set points to infer the structure of a power distribution network. The second approach to structure learning involves passive or non invasive approaches, where, the system is not perturbed actively and topology is inferred solely from the observations of the variables of interest \cite{Pea88},\cite{pearl2009causality}, \cite{meinshausen2006high},\cite{friedman2008sparse}. In this article we focus on the passive approach to topology inference. 

Inference of the network structure from observations assuming that the entities are a collection of random variables is studied in \cite{Pea88, friedman2008sparse}. However, such approaches are ineffective when lagged correlations are significant and the past of one entity influences the present of another entity. In such situations, the observed entities are modeled as stochastic processes or time series. In the time series setting, topology inference under the assumption of wide sense stationary time series using multivariate Wiener filtering is described in \cite{materassi2012problem} and using power spectral density in \cite{shahrampour2015topology}. These approaches are not applicable to the case of non stationary time series. In this article we utilize multivariate Wiener filtering for topology inference with consistency guarantees for non stationary time series, which are cyclostationary. Cyclostationary processes are characterized by a periodic mean and correlation function. Many phenomenon in nature exhibit periodic behaviour \cite{napolitano2012generalizations}; examples include weather patterns \cite{kim1996}\cite{KimNorth1997}, regulatory processes in human body \cite{biologyRef1}, \cite{biologyRef2}, \cite{biologyRef3}, trends in stock market \cite{stockRef1}, \cite{stockRef2}, motion of mechanical systems \cite{mechRef1}, \cite{mechRef2}, communication systems \cite{communicationRef1}, \cite{communicationRef2}, and planet movements \cite{planetRef1}. 
 In \cite{talukRecon2015}, the authors use multivariate Wiener filtering for topology inference in a collection of cyclostationary time series. However, the approach presented in \cite{talukRecon2015} introduces possibly many spurious links and does not provide exact inference of the underlying network topology from the observations. 
In this article, we extend the work in \cite{talukRecon2015} and present an algorithm with guarantees for {\it exact} reconstruction of a network structure in a collection of cyclostationary processes. Our approach does not require any structural assumptions on the underlying graph structure nor any knowledge of the system parameters. Our results are based on the properties of the Wiener filters of collection of wide sense stationary time series for inference of exact topology in a network of wide sense stationary processes \cite{harish2018topology, talukdar2017learning}. We validate the algorithm presented through implementation on time series of nodal states generated from an interconnected Resistor-Capacitor (RC) network with input cyclostationary process. We compare the performance of our algorithm against the algorithms presented in \cite{talukRecon2015} and show the superiority of the approach presented here over prior work. To the authors' best knowledge, this is the first work on exact topology reconstruction of cyclostationary processes with provable guarantees. 

The rest of the paper is organized as follows. Section II describes the learning problem of a general linear dynamical system with cyclostationary inputs. In section III, the approach for tackling the topology learning problem from cyclostationary time series data is presented. Analytic results as well as an implementable algorithm is presented. The performance of the algorithm is demonstrated on a simulated RC network in Section IV. Conclusions are presented in Section V.

\subsection*{\bf Notation:}
\noindent 
The symbol $:=$ denotes a definition\\
\begin{comment}
$\|x\|$ : $2$ norm of a vector $x$ \\
$[x]_i$ : $i-th$ element of a vector $x$ \\
$x(k)$ : $k^{th}$ sample of time domain signal $x$ \\
$\mathsf{x}(z)$ : z-transform of $x(k)$ \\
$[A]_{j,:}$ : $j-th$ row of matrix $A$\\
$[A]_{:,i}$ : $i-th$ column of matrix $A$\\
$[A]_{i,j}$ : $i-th$ row, $j-th$ column element of matrix $A$\\
\end{comment}
$\boldsymbol{0}$ : zero matrix of appropriate dimension\\ 
$A^{*}$ : the conjugate transpose of matrix $A$\\
$A'$ : transpose of a matrix $A$\\
$\|A\|_{\infty}$ : largest row sum of matrix $A$\\
$A \succeq 0$ : $A$ is a positive semi-definite matrix\\
$A \preceq 0$ : $A$ is a negative semi-definite matrix\\

%%%%%%%%%%%%%%%%%%%%%%%%%%%%%%%%%%%%%%%%%%%%%%%%%%%%%%%%%%%%%%%%%%%%%%%%%%%%%%%%

\section{Linear Dynamical System with cyclostationary inputs}
In this section, we briefly define and describe a networked cyclostationary process. 
\subsection{Introduction to Cyclostationary Process}
A random process $x(t) \in L^{2}(\Omega,\mathcal{F},\mathcal{P})$, with its mean function $m(t) := \mathbb{E}[x(t)]$ and correlation function $R_x(s,t) := \mathbb{E}[x(s)x(t)]$ is wide sense cyclostationary (WSCS) or periodically correlated with period $T>0$ if for every $s,t \in \mathbb{Z}$ there is no value smaller than $T$ such that $m(t) = \mathbb{E}[x(t)]= \mathbb{E}[x(t+T)]= m(t + T)$ and $R_{x}(s,t) = \mathbb{E}[x(s)x(t)] = \mathbb{E}[x(s+T)x(t+T)] = R_{x}(s+T,t+T)$. The random processes $x(t)$ and $e(t)$ are said to be \textbf{jointly wide sense cyclostationary (JWSCS)} if $x(t)$ and $e(t)$ are cyclostationary with period $T$ and the cross correlation function $R_{x,e}(s,t)$ is periodic with period $T$ that is $R_{x,e}(s,t)$ $=R_{x,e}(s+T,t+T)$. An example of a cyclostationary process is a periodic signal superimposed with a \textbf{wide sense stationary (WSS)} signal. A WSS signal by itself is also a cyclostationary signal with period $T=1$.

\subsection{Network of Cyclostationary Processes}
Consider a system with $m$ sub-systems indexed by $i$ with the dynamics of each sub-system $x_i$ given by 
\begin{align}\label{eqn:lindyn}
 \sum_{n=1}^{l}a_{n,i}\frac{d^nx_i}{dt^n}=\sum_{j=1,j\neq i}^{m}b_{ij}(x_j(t) - x_i(t)) + p_i(t), 
\end{align}
$ i \in\{1,2,..,m\}$, where $x_i(t)$ is the time series data of the sub-system $i$; $a_{m,i}$ and $b_{ij}$ are system parameters and $p_i(t)$ is a forcing function acting on sub-system $i$, which is a zero mean WSCS of period $T$, uncorrelated with $\{p_j(t)\}_{i \neq j}$. Since a linear transformation of $\{p_j(k)\}_{j=1}^m$ results in $\{x_j(k)\}_{j=1}^m$, the collection of $(\{x_j(k)\}_{j=1}^m,\{p_j(k)\}_{j=1}^m)$ are JWSCS \cite{talukRecon2015}. All the processes and the system parameters of (\ref{eqn:lindyn}) are real valued and $b_{ij}$ are non-negative. Equation (\ref{eqn:lindyn}) is known as Linear Dynamical Model (LDM).

Converting (\ref{eqn:lindyn}) from time domain to z-domain by applying bilinear transform (Tustin's method \cite{oppenheim1999discrete}) on (\ref{eqn:lindyn}), we obtain 
\small
\begin{align}\label{eqn:cyclo_LDM}
\mathsf{S_i}(z)\mathsf{x}_i(z) &= \sum_{j=1,j\neq i}^{m}b_{ij}\mathsf{x}_j(z) + \mathsf{p}_i(z),\nonumber\\
\mathsf{x}_i(z) &= \sum_{j=1,j\neq i}^{m}\frac{b_{ij}}{\mathsf{S_i}(z)}\mathsf{x}_j(z) + \frac{\mathsf{p}_i(z)}{\mathsf{S_i}(z)},\nonumber\\
\mathsf{x}_i(z) &= \sum_{j=1}^{m}\mathsf{h}_{ij}(z)\mathsf{x}_j(z) + \mathsf{e}_i(z)\\
{x}_i(k) &= \sum_{j=1}^{m}\sum_{n= -\infty}^{\infty}h_{ij}(n){x}_j(k-n) + {e}_i(k)
\end{align}
\normalsize
where, $\mathsf{h}_{ii}(z)=0$, $\mathsf{S_i}(z)=\sum_{n=1}^{l}a_{n,i}(\frac{2(1-z^{-1})}{\Delta t(1+z^{-1})})^{n}+\sum_{j=1,j\neq i}^{m} b_{ij}$ is a scalar function, $\mathsf{h}_{ij}(z) = \frac{b_{ij}}{\mathsf{S_i(z)}}$, $\mathsf{e}_i(z)=$ $\frac{\mathsf{p}_i(z)}{\mathsf{S_i(z)}}$. Note that $\mathsf{h}_{ij}(z)$ is the Z-transform of $h_{ij}(n)$, that is $\mathsf{h}_{ij}(z) = \mathcal{Z}[h_{ij}(n)]$. Here, $\mathsf{x}_i(z)$, $\mathsf{p}_i(z)$, $\mathsf{e}_i(z)$ are the z-transforms of $x_i(k),p_i(k) \text{ and } e_i(k)$ respectively. $e_j(k)$ is the input noise which is zero mean WSCS and uncorrelated with $\{e_i\}_{i=1,i \neq j}^{m}$.

The Linear Dynamic Graph (LDG) given by $\mathcal{G}(V,E)$ associated with the LDM described by (\ref{eqn:lindyn}) is obtained by setting each vertex (node) $i \in V$ to represent every subsystem $i$ and by placing a direct edge in $E$ from vertex $i$ to vertex $j$ if $b_{ji}\not=0$. The set of {\it children} of node $x_j$ is defined as $\mathcal{C}_G(x_j):=\{x_i| (x_j,x_i)\in E\}$, its set of parents as $\mathcal{P}_G(x_j):=\{x_i| (x_i,x_j)\in E\}$ and its set of kins as $\mathcal{K}_G(x_j):=\{x_i| x_i\neq x_j \text{ and } x_i \in \mathcal{C}_G(x_j)~\cup\mathcal{P}_G(x_j)~\cup~\mathcal{P}_G(\mathcal{C}_G(x_j)) \}.$
The kin topology/moral graph of the directed graph $\mathcal{G}(V,E)$ is defined as an undirected graph $\mathcal{G}_M(V,\Tilde{E})$ with a vertex set $V$ and an edge set $\Tilde{E} = \{(x_i,x_j)|x_i \in \mathcal{K}_G(x_j) \forall j\}$. The topology of the directed graph $\mathcal{G}(V,E)$ is defined as an undirected graph $\mathcal{G}_T(V,\Tilde{E})$ with a vertex set $V$ and an edge set $\Tilde{E} = \{(x_i,x_j)|(x_i,x_j) \text{ or } (x_j,x_i) \in E\}$ and is denoted by $top(\mathcal{G})= \mathcal{G}_T$. Example of topology and kin topology are shown in Fig. \ref{fig:rcnetwork}(c) and (d) respectively. In the topology $\mathcal{G}_T(V,\Tilde{E})$ of the directed graph $\mathcal{G}$, a neighbor is defined as $\mathcal{N}_{\mathcal{G}_T}(x_j):=\{x_i| (x_i,x_j) \in \Tilde{E}\}$ and a two hop neighbor is defined as $\mathcal{N}_{\mathcal{G}_T}(x_j,2):=\{x_i\in V| (x_j,x_k),(x_i,x_k) \in \Tilde{E} \text{ for some } x_k \in V\}$

\begin{figure}[tb]
	\centering
	\begin{tabular}{cc}
	\includegraphics[width=0.4\columnwidth]{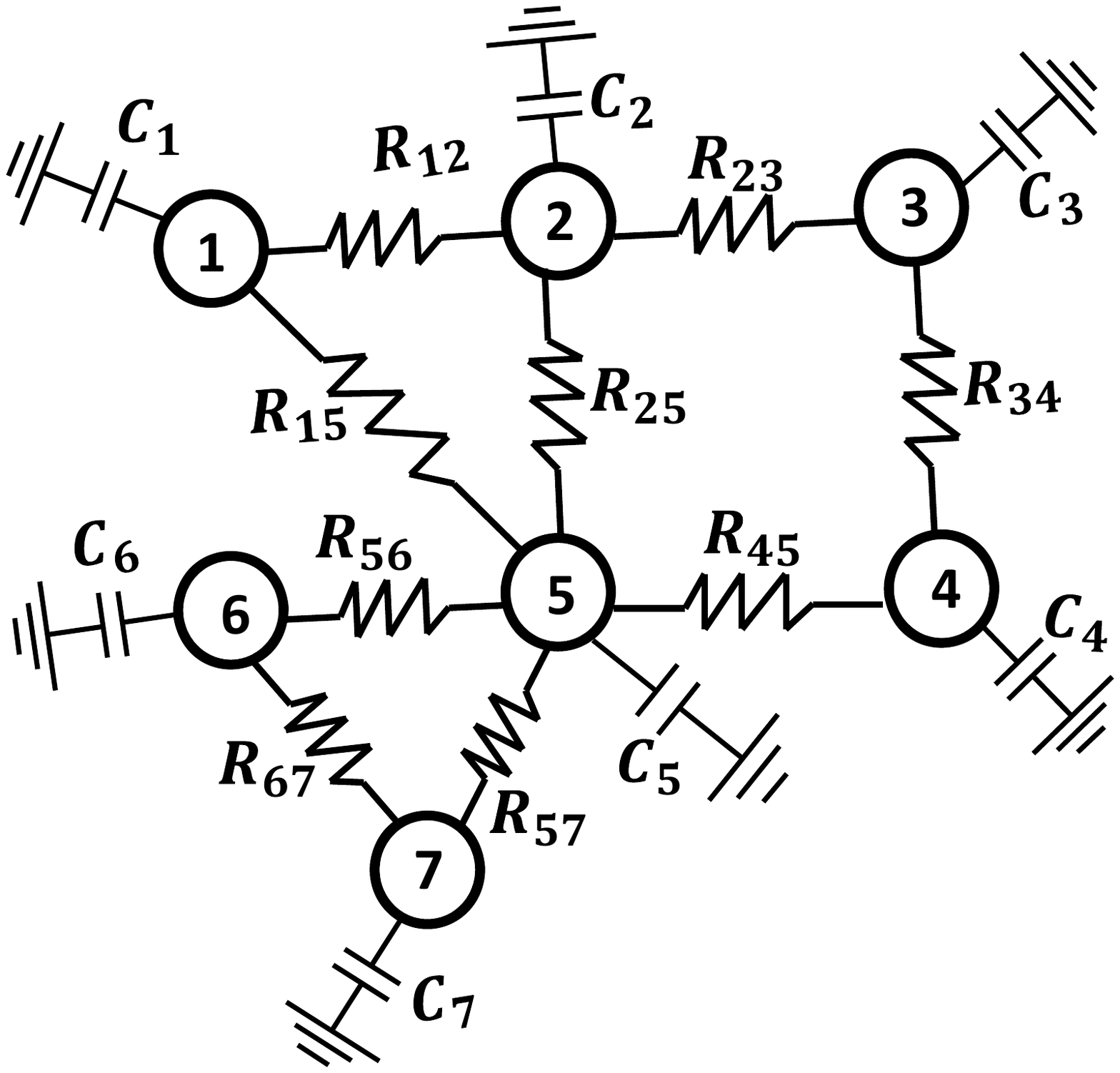}&
	\includegraphics[width=0.35\columnwidth]{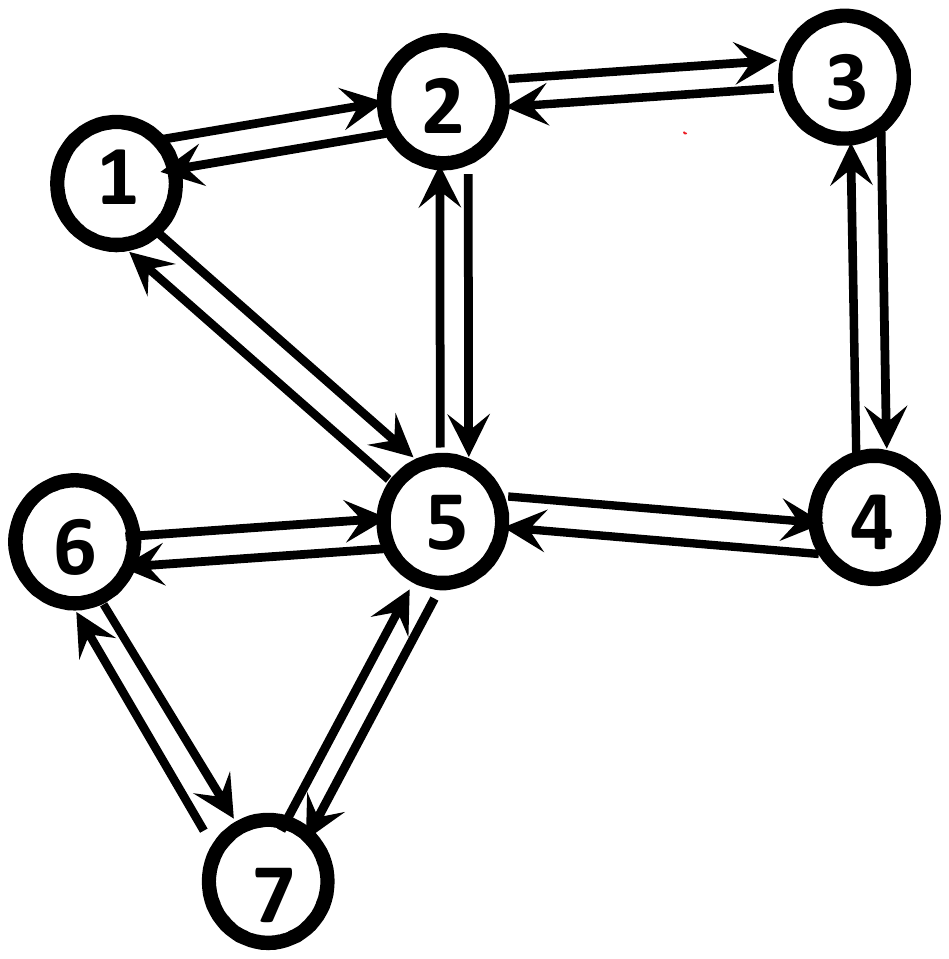}\\
		(a) & (b)
	\end{tabular}
	\begin{tabular}{cc}
	\includegraphics[width=0.32\columnwidth]{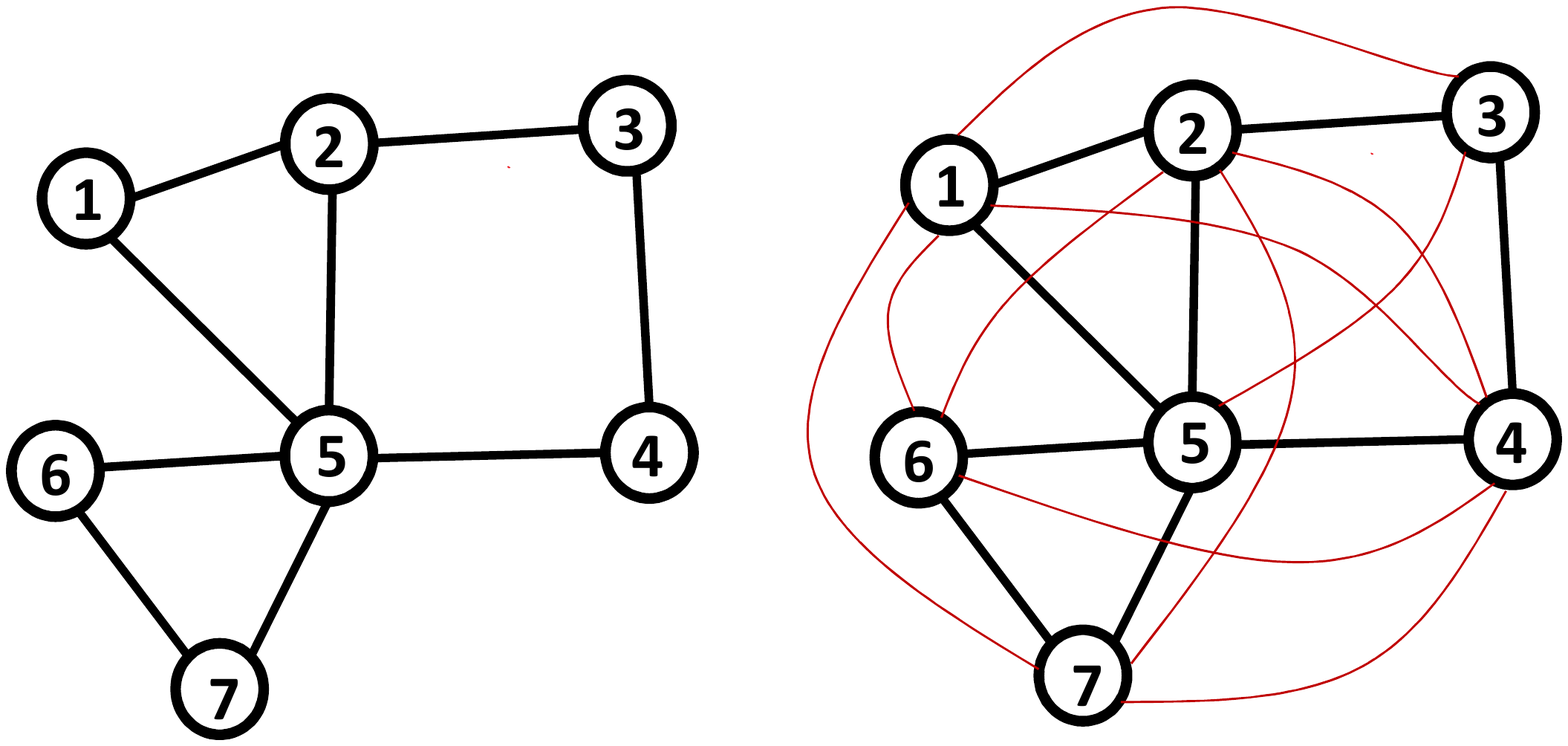}&
	\includegraphics[width=0.35\columnwidth]{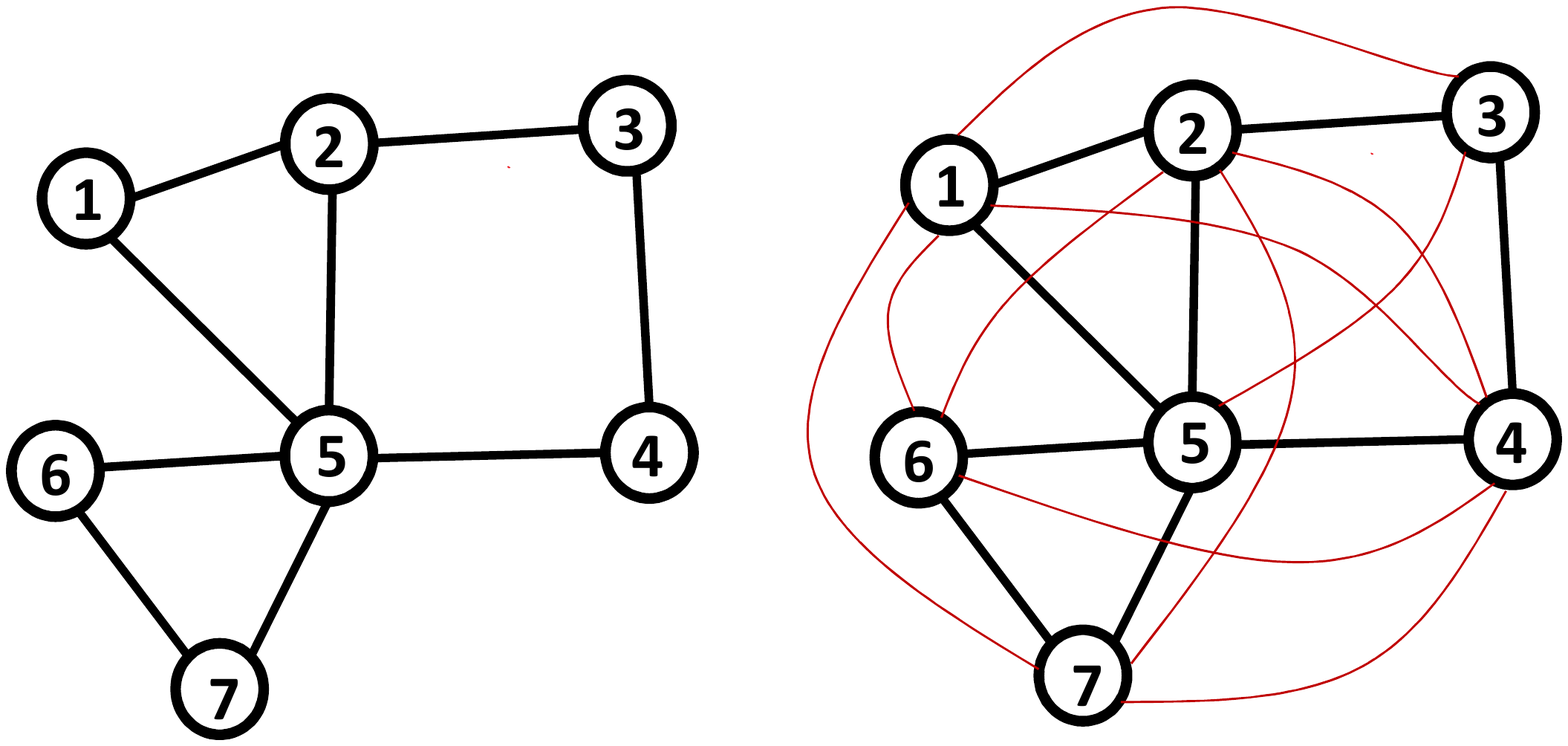}\\
		(c) & (d)
	\end{tabular}
	\caption{(a) RC network, (b) directed graph, (c) it's topology and (d) kin topology/moral graph.
		\label{fig:rcnetwork}}
\end{figure}

\subsection{Lifting to vector wide sense stationary processes}

{The vector random process $X_i(k)$ is said to be (weakly) stationary if $E([X_i(k)]_j) = m_i(j)$, where $[X_i(k)]_j$ is the $j^{th}$ element of $X_i(k)$ and $R_{X_i}^{jk}(s,t) = E([X_i(s)]_j[X_i(t)]_k)=E(x_i(s+j-1)x_i(t+k-1)) = E(x_i(s-t+j)x_i(k)) = R_{X_i}^{jk}(s-t)$ for all $s, t \in \mathbb{Z}, \text{ and } j, k \in \lbrace1,2,\cdots,T\rbrace$}

Any cyclostationary process with a period $T$ can be represented as multivariate, vector, wide sense stationary process through the {\it lifting} process \cite{hurd2007periodically} described next. Processes $x_i(k)$ and $e_i(k)$ are lifted to $T$-variate WSS processes $X_i(k)= [x_i(kT)\ ...\ x_i(kT+T-1)]'$ and $E_i(k)= [e_i(kT)\ ...\ e_i(kT+T-1)]'$ respectively and their z-transform's are denoted as $\mathsf{X}_{i}(z)$ and $\mathsf{E}_{i}(z)$ respectively. The vector process $E_i(k)$ is uncorrelated with $\{E_j(k)\}_{j\neq i}$.

\begin{lemma}
Given the dynamics of a network of cyclostationary processes $\{x_i(k)\}_{i=1}^m$ in (\ref{eqn:cyclo_LDM}), the dynamics of the lifted vector WSS processes $\{X_i(k)\}_{i=1}^m$ is given by $\mathsf{X}_{j}(z) = \sum_{i=1}^{m}\mathsf{H}_{ji}(z)\mathsf{X}_i(z) + \mathsf{E}_j(z)$, where $\mathsf{H}_{ji}(z) = D(z^{\frac{1}{T}})\mathsf{h}_{ji}(z^{\frac{1}{T}})$
\end{lemma}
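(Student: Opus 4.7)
The plan is to run a polyphase-decomposition argument: write the scalar z-domain relation at $w = z^{1/T}$, split every signal into its $T$ polyphase components (which are exactly the entries of the lifted vector), re-bundle the convolution into a $T \times T$ matrix convolution, and finally rewrite that matrix as a multiple of the scalar $\mathsf{h}_{ji}(z^{1/T})$ to match the claimed form.

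\emph{Step one.} Starting from the scalar time-domain model $x_i(k)=\sum_j \sum_n h_{ij}(n)\,x_j(k-n)+e_i(k)$, I would write $k=qT+r$ and $n=pT+s$ with $0\le r,s<T$. Substituting and splitting the inner sum by whether $r\ge s$ or $r<s$ (which determines whether $x_j(k-n)$ belongs to block $q-p$ or $q-p-1$) gives the polyphase convolution
\begin{equation*}
[X_i(q)]_r \;=\; \sum_{j}\sum_{p}\sum_{t=0}^{T-1} [H_{ij}(p)]_{rt}\,[X_j(q-p)]_t + [E_i(q)]_r,
\end{equation*}
where $[H_{ij}(p)]_{rt}=h_{ij}(pT+r-t)$ with the understanding that entries requiring negative indices are absorbed into the block with $p$ replaced by $p-1$. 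Taking z-transforms gives $\mathsf{X}_i(z)=\sum_j \mathsf{H}_{ij}(z)\,\mathsf{X}_j(z)+\mathsf{E}_i(z)$, with the pseudo-circulant block
\begin{equation*}
[\mathsf{H}_{ij}(z)]_{rt} \;=\; \begin{cases} \mathsf{h}_{ij}^{(r-t)}(z), & r\ge t,\\ z^{-1}\,\mathsf{h}_{ij}^{(T+r-t)}(z), & r<t, \end{cases}
\end{equation*}
where $\mathsf{h}_{ij}^{(s)}(z):=\sum_p h_{ij}(pT+s)\,z^{-p}$ is the $s$-th polyphase component.

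\emph{Step two.} To identify this block with $D(z^{1/T})\,\mathsf{h}_{ji}(z^{1/T})$, I would exploit the basic polyphase identity
\begin{equation*}
\mathsf{h}_{ij}(w) \;=\; \sum_{s=0}^{T-1} w^{-s}\,\mathsf{h}_{ij}^{(s)}(w^T),
\end{equation*}
obtained by the same splitting $n=pT+s$. Evaluating at the $T$ branches $w=z^{1/T}\zeta^k$, $\zeta=e^{-2\pi i/T}$, one can solve for each polyphase component $\mathsf{h}_{ij}^{(s)}(z)$ as a linear combination of the $T$ values $\{\mathsf{h}_{ij}(z^{1/T}\zeta^k)\}_{k=0}^{T-1}$ with coefficients depending only on $s$, $k$, and $z^{1/T}$. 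Assembling these linear combinations into the pseudo-circulant form above exhibits $\mathsf{H}_{ij}(z)$ as a fixed $T\times T$ operator $D(z^{1/T})$ acting on the scalar function $\mathsf{h}_{ij}(\cdot)$ evaluated at $z^{1/T}$, which is the claimed factorization. (A symmetric check on the noise term, combined with the stated uncorrelatedness of the $e_i$, shows that the lifted $\mathsf{E}_j(z)$'s remain block-uncorrelated, so nothing is lost in the lifting.)

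\emph{Main obstacle.} The only non-mechanical step is the second one: the claimed identity writes a $T\times T$ operator $\mathsf{H}_{ij}$ as a single scalar $\mathsf{h}_{ij}(z^{1/T})$ pre-multiplied by $D(z^{1/T})$, which is only consistent if $D(z^{1/T})$ is interpreted as the multi-branch/polyphase extraction operator described above rather than ordinary matrix multiplication by a fixed $D$. Getting the indexing of the pseudo-circulant (the shift in $p$ when $r<t$) to match the particular convention the authors use for $D$, and keeping careful track of the branch choice in $z^{1/T}$, is where I would expect the proof to demand the most attention; the rest is bookkeeping on the polyphase sum.
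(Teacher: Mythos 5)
Your proof follows essentially the same route as the paper's: the time-domain blocking $n=aT+b$, $t=p-b$ in your Step one reproduces the paper's filter-matrix entries $H_{ij,pt}[a]=h_{ij}(aT+p-t)$ exactly (your $r$, $p$ are the paper's $p$, $a$), and the stacking into $X_i=\sum_j H_{ij}*X_j+E_i$ is identical. The final identification with $D(z^{\frac{1}{T}})\mathsf{h}_{ij}(z^{\frac{1}{T}})$, which you rightly flag as the delicate step, is simply asserted in the paper (``This implies \ldots'') with no justification, so your reading of $D(z^{\frac{1}{T}})$ as a polyphase-extraction convention rather than a literal matrix-times-scalar product of Laurent series in $z^{\frac{1}{T}}$ is, if anything, more careful than the published argument.
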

\begin{proof}
\textcolor{red}{See the appendix}
\end{proof}

\

The dynamics of $\{X_j\}_{j=1}^{m}$ is given by 
\small
\begin{align}\label{dynamiclink}
X_j(k) &= \sum_{i=1}^{m}H_{ji}*X_i(k)+E_j(k) \\
\mathsf{X}_{j}(z) &= \sum_{i=1}^{m}\mathsf{H}_{ji}(z)\mathsf{X}_i(z) + \mathsf{E}_j(z)\\
\text{ Where, }\nonumber \\
\mathsf{H}_{ji}(z) &= \mathcal{Z}[H_{ji}(k)] = D(z^{\frac{1}{T}}) \mathsf{h}_{ji}(z^{\frac{1}{T}})\nonumber \\
E_i(k) = &\begin{bmatrix}e_i(kT)\\e_i(kT+1)\\\vdots\\e_i(kT+T-1) \end{bmatrix};X_i(k) = \begin{bmatrix}x_i(kT)\\x_i(kT+1)\\\vdots\\x_i(kT+T-1) \end{bmatrix}\nonumber\\
\mathsf{E}_i(z) = &\begin{bmatrix}\mathsf{e}_i(z^{\frac{1}{T}})\\z^{\frac{1}{T}}\mathsf{e}_i(z^{\frac{1}{T}})\\\vdots\\z^{\frac{T-1}{T}}\mathsf{e}_i(z^{\frac{1}{T}}) \end{bmatrix}= R(z^{\frac{1}{T}})\mathsf{e}_i(z^{\frac{1}{T}}) = R(z^{\frac{1}{T}})\frac{\mathsf{p}_i(z^{\frac{1}{T}})}{\mathsf{S_i}(z^{\frac{1}{T}})}\nonumber\\
\mathsf{X}_i(z) = &\begin{bmatrix}\mathsf{x}_i(z^{\frac{1}{T}})\\z^{\frac{1}{T}}\mathsf{x}_i(z^{\frac{1}{T}})\\\vdots\\z^{\frac{T-1}{T}}\mathsf{x}_i(z^{\frac{1}{T}}) \end{bmatrix}=R(z^{\frac{1}{T}})\mathsf{x}_i(z^{\frac{1}{T}})\nonumber\\
D(z) = &\begin{bmatrix}z^0&z^1\dots&z^{T-1}\\\vdots&\ddots&\vdots\\z^{-(T-1)}&\dots&z^0\\\end{bmatrix};R(z)= \begin{bmatrix}1\\z\\\vdots\\z^{T-1}\end{bmatrix}\nonumber
\end{align}
\normalsize
Here, $\mathsf{H}_{ji}(z)$ is a $T \times T$ transfer matrix, $\mathsf{E}_i(z)=\mathcal{Z}(E_i(k))$ and $\mathsf{X}_i(z)=\mathcal{Z}(X_i(k))$. The collection $\{X_i(k)\}_{i=1}^m$ are jointly vector WSS processes. {The LDG for the vector processes $\{X_i(k)\}_{i=1}^m$ is defined as follows: a directed edge from $X_i$ to $X_j$ exists if and only if $\mathsf{h}_{ji} \neq 0$.
}

\begin{lemma}
A directed edge from a cyclostationary process $x_i$ to $x_j$ in the LDG of (\ref{eqn:lindyn}) exists if and only if there exists a directed edge from $X_i$ to $X_j$ 
\end{lemma}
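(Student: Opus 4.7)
The plan is to reduce the biconditional to a statement about the underlying scalar and matrix transfer functions, and then invoke the explicit factorization from Lemma 2.1. By definition, a directed edge from $x_i$ to $x_j$ in the LDG of (\ref{eqn:lindyn}) exists iff $b_{ji}\neq 0$; since $\mathsf{h}_{ji}(z)=b_{ji}/\mathsf{S}_j(z)$ with $\mathsf{S}_j(z)$ a non-zero rational function, this is equivalent to $\mathsf{h}_{ji}\not\equiv 0$. Similarly, by the definition following (\ref{dynamiclink}), the edge $X_i\to X_j$ in the LDG of the lifted vector processes exists iff $\mathsf{H}_{ji}\not\equiv \boldsymbol{0}$. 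Hence the lemma reduces to proving $\mathsf{h}_{ji}\equiv 0 \iff \mathsf{H}_{ji}\equiv \boldsymbol{0}$.

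The forward direction is immediate: if $\mathsf{h}_{ji}\equiv 0$, then by Lemma 2.1, $\mathsf{H}_{ji}(z)=D(z^{1/T})\cdot 0 = \boldsymbol{0}$.

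For the converse, I would read off the entries of $D(z)$ explicitly. Its $(p,q)$ entry is the monomial $z^{q-p}$, so the $(p,q)$ entry of $D(z^{1/T})$ equals $z^{(q-p)/T}$, which is a non-vanishing rational function of $z^{1/T}$ for every pair $(p,q)$. Therefore, the $(p,q)$ entry of the product $\mathsf{H}_{ji}(z)=D(z^{1/T})\,\mathsf{h}_{ji}(z^{1/T})$ is simply $z^{(q-p)/T}\mathsf{h}_{ji}(z^{1/T})$, and this is identically zero iff $\mathsf{h}_{ji}\equiv 0$. Taking the contrapositive gives $\mathsf{H}_{ji}\equiv \boldsymbol{0} \Rightarrow \mathsf{h}_{ji}\equiv 0$, completing the equivalence.

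This is essentially a bookkeeping corollary of Lemma 2.1, so there is no substantive obstacle. The one point that is worth flagging carefully is the interpretation of "edge exists" in the lifted LDG: the definition demands $\mathsf{H}_{ji}$ to be a non-zero transfer \emph{matrix}, not that every one of its entries be non-zero. The factorization through the monomial matrix $D(z^{1/T})$ makes these two notions coincide in this setting, which is precisely what makes the equivalence clean.
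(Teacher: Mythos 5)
Your proof is correct and follows essentially the same route as the paper's: both reduce the claim to the equivalence $\mathsf{h}_{ji}\equiv 0 \iff \mathsf{H}_{ji}\equiv \boldsymbol{0}$ via the factorization $\mathsf{H}_{ji}(z)=D(z^{1/T})\,\mathsf{h}_{ji}(z^{1/T})$. The only difference is that you justify the converse direction explicitly, by noting that every entry of $D(z^{1/T})$ is a non-vanishing monomial $z^{(q-p)/T}$, a step the paper asserts without detail.
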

\begin{proof}
Given that a directed edge exists from cyclostationary process $x_i$ to $x_j$ then $\mathsf{h}_{ji}(z)\neq 0$ in (\ref{eqn:cyclo_LDM}). It follows from (\ref{dynamiclink}) that $\mathsf{H}_{ji}(z)\neq \boldsymbol{0}$ which implies that there is a directed edge from vector stationary process $X_i$ to $X_j$. 
To prove the converse, if suppose there is a directed edge from $X_i$ to $X_j$ then $\mathsf{H}_{ji}(z) \neq \boldsymbol{0}$ in (\ref{dynamiclink}). This implies that $\mathsf{h}_{ji}(z)\neq 0$ from (\ref{dynamiclink}) and there is a directed edge from $x_i$ to $x_j$.
\end{proof}
\begin{remark}
The above lemma concludes that kin relationship's in the LDG of $\{x_j\}_{j=1}^{m}$ identical to the kin relationship's in the LDG of $\{X_j\}_{j=1}^{m}$. This enables us to reconstruct the topology of LDG for the cyclostationary processes $\{x_j\}_{j=1}^{m}$ by reconstructing the topology for their equivalent vector stationary processes $\{X_j\}_{j=1}^{m}$.
\end{remark}
The Power Spectral Density (PSD) of the vector WSS process $X_i(k)$ is related to that of scalar WSCS process $x_i(k)$. The PSD matrix ${\Phi}_{P_i}(z), {\Phi}_{E_i}(z)$ are block matrices. Note that $\mathsf{P}_i(z) = \mathsf{S_i}(z^{\frac{1}{T}})\mathsf{E}_i(z)$. Therefore, 
\begin{align}
{{\Phi}_{\mathsf{P}_i}(z)}=& \mathsf{S_i}(z^{\frac{1}{T}}){{\Phi}_{\mathsf{E}_i}(z)}(\mathsf{S_i}(z^{\frac{1}{T}}))^*\nonumber\\
\Rightarrow
{\Phi}_{\mathsf{E}_i}^{-1}(z)=& {\Phi}_{\mathsf{P}_i}^{-1}(z)|\mathsf{S_i}(z^{\frac{1}{T}})|^2\label{phiE_phiP}
\end{align}

The rest of the article focuses on the reconstruction of topology of a dynamical network of $T\times 1$ vector stationary processes that are jointly wide sense stationary (in contrast to $x_i$ which are jointly wide sense cyclostationary). 

\section{Learning the structure from time series data}
Here we assume that $x_i(k)$ is available as a measured time series for all subsystems $i$. Given the time series data of the dynamical system with noise modeled as cyclostationary, the main aim of this article is to reconstruct the topology of the LDG of the system. The time series $x_i(k)$ is lifted to vector WSS process $X_i(k)$.
The output processes $\{X_j(k)\}_{j=1}^{m}$ of (\ref{dynamiclink}) are defined in a compact way as:
\small
\begin{align}\label{eq:LDG general}
X(k)=&\mathbb{H}*X(k) + E(k)\\
\mathsf{X}(z)=&\mathbb{H}(z)\mathsf{X}(z) + \mathsf{E}(z),\text{ where }\\
X(k)=\begin{bmatrix}X_1(k)\\\vdots\\X_m(k)\end{bmatrix};\ &E(k)=\begin{bmatrix}E_1(k)\\\vdots\\E_m(k)\end{bmatrix};\nonumber\\
\mathsf{X}(z)=\mathcal{Z}(X(k)); \ &\mathsf{E}(z)=\mathcal{Z}(E(k))\nonumber
\end{align}
\normalsize
The transfer block matrix $\mathbb{H}(z)$ is $mT \times mT$ matrix with size of each block $T \times T$ and its diagonal entries are $\boldsymbol{0_{T \times T}}$. The matrix $\mathbb{H}(z)$ in (\ref{eq:LDG general}) characterizes the interconnection between the processes $X_i(k)$ and $X_j(k)$ with the $(i,j)^{th}$ block of $\mathbb{H}(z)$ being $\mathsf{H}_{ij}(z)$. The relation (\ref{eq:LDG general}) defines a map from a vector processes $E(k)$ to a vector processes $X(k)$. The LDM described by (\ref{eq:LDG general}) is well-posed if the operator $(\mathbb{I}-\mathbb{H}(z))$ is invertible almost surely. Therefore for any vector processes ${E}(k)$ there exists a vector ${X}(k)$ of processes and the LDM described by (\ref{eq:LDG general}) is said to be topologically detectable if $\Phi_{E_j} \succ 0$ for any $\omega \in[-\pi,\pi]$ and for any $j=1,...,m$.

 The proposed algorithm in this article for exact reconstruction of the true topology is based on inverse PSD of the time series described above.

\par 

\subsection{Reconstruction of Moral Graph}
In this section, we present an algorithm for reconstruction of the network structure of LDG $\mathcal{G}$ of (\ref{dynamiclink}), which is well-posed and topologically detectable. 

\begin{comment}
Consider any two vertices $X_i,X_j$ with a solution $\boldsymbol{\mathsf{W_{ji}}}(z)$ of (\ref{eq:vec_opt_freq}). If $\boldsymbol{\mathsf{W_{ji}}}(z) \neq \boldsymbol{0}$, add an edge between $X_i,X_j$. The process is repeated for all $j$ and $i.$ . Let the resulting undirected graph be $G$. Then it is shown in \cite{talukRecon2015} that $G$ is the moral graph of the LDG associated with the LDM (\ref{eq:LDG general}) (and thus the LDG associated with the LDM (\ref{cyclo_LDM})). The $(j,i)$ entry block matrix (\cite{talukRecon2015}) of $\Phi_{X}^{-1}(z)$ is 
\small
\begin{align}\label{Phi_wji}
 {\boldsymbol{B}_{j}^{\prime}}{{\Phi}_{X}^{-1}}{\boldsymbol{B}_{i}}=&-\Phi_{\Eps_j}^{-1}\boldsymbol{\mathsf{W}}_{ji}(z)\\
 \text{where } \Eps_j(k)=& X_j(k)- \sum_{i=1,i\neq j}^{m}(\boldsymbol{\mathsf{W}}_{ji}*X_i)(k).\nonumber
\end{align}
\normalsize
\end{comment}

From (\ref{eq:LDG general}) $X=(\mathbb{I}-\mathbb{H}(z))^{-1}E$ which implies that $\Phi_{X}^{-1}=(\mathbb{I}-\mathbb{H})^{*}\Phi_{E}^{-1}(\mathbb{I}-\mathbb{H})$. We now consider the $(i,j)$ block matrix of $\Phi_{X}^{-1}$, for $i\neq j$. Using the block diagonal structure of $\Phi_{E}$, we have
\small
\begin{align}\label{PhiInv_expression}
&{\boldsymbol{B}_{j}^{\prime}}{{\Phi}_{X}^{-1}}{\boldsymbol{B}_{i}}={\boldsymbol{B}_{j}^{\prime}}(\mathbb{I}-\mathbb{H})^{*}\Phi_{E}^{-1}(\mathbb{I}-\mathbb{H}){\boldsymbol{B}_{i}}\nonumber \\
&=-{{\Phi}_{E_j}^{-1}}{\mathsf{H}_{ji}}-{{\mathsf{H}_{ij}}^{*}}{{\Phi}_{E_i}^{-1}}+{\sum_{k=1}^{m}}{(\mathsf{H}_{kj})^{*}}{{\Phi}_{E_k}^{-1}}(\mathsf{H}_{ki}).
\end{align}
\normalsize

where $\boldsymbol{B}_j = [\boldsymbol{0} \ \boldsymbol{0} \ .. \ \boldsymbol{0} \ \boldsymbol{I}_{T\times T}\ \boldsymbol{0}\ ..\ \boldsymbol{0}]^{\prime} $ is a matrix $\in \mathbb{R}^{mT\times T}$ that has an identity matrix $\boldsymbol{I}_{T\times T}$ as the $j^{th}$ block and other blocks as $\boldsymbol{0}$. 

If $(i,j)$ are not kins then $j$ cannot be the parent of $i$ ($\mathsf{H}_{ij} = \boldsymbol{0}$) or child of $i$ ($\mathsf{H}_{ji} = \boldsymbol{0}$) or spouse of $i$ (which implies that there does not exist any $k \in \{1,...,m\} \text{ with } \mathsf{H}_{kj} \neq \boldsymbol{0}\ \text{and}\ \mathsf{H}_{ki} \neq \boldsymbol{0}$). 

\begin{comment}
Therefore from (\ref{PhiInv_expression}), $(j,i)$ entry block matrix of $\Phi_{X}^{-1}(z)$ is zero and thus using (\ref{Phi_wji}) it follows that $\boldsymbol{\mathsf{W}}_{ji}(z)=\boldsymbol{0}$. 
\end{comment}
This gives a sufficient criterion to identify kins which include parents, children, spouses and the moral graph is obtained (see Fig. \ref{fig:rcnetwork}(d)). The limitation with the moral graph is that it may include substantial spurious links between two-hop neighbors as shown in Fig. \ref{fig:rcnetwork}(d). Thus, it is important to devise a method for the detection and removal of spurious links from the moral graph and to deduce the topology of the LDG; which forms the focus of the next section.
\subsection{Pruning the Spurious Strict Two Hop Neighbor Links}
The following theorem will provide a sufficient condition for elimination of all strict two-hop neighbors in the moral graph $\mathcal{G}_M$ which results in true topology ${\mathcal{G}_T}$ 
\begin{theorem}\label{3.1}
Consider a LDM $(\mathbb{H}(z),{E})$ which is well-posed and topologically detectable, with its associated LDG $\mathcal{G}$ and topology $\mathcal{G_T}$. Let the output of the LDM be given by $X(k)$ according to (\ref{eq:LDG general}).
Let the vertices $i$ and $j$ in $\mathcal{G_T}$ be such that, $i \in N_{\mathcal{G_T}}(j,2)$ but $i\notin N_{\mathcal{G_T}}(j)$, that is, $i,j$ are strict two-hop neighbors. Then, $\boldsymbol{\boldsymbol{B}_{j}^{\prime}}{{\Phi}_{X}^{-1}}{\boldsymbol{B}_{i}} \succeq 0$, for all $\omega \in [0, 2 \pi)$.
\end{theorem}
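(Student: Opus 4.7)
The plan is to start from the closed-form expression for the off-diagonal block supplied by equation (\ref{PhiInv_expression}) and exploit two structural features of the lifted model: whenever $i$ and $j$ are non-neighbors, the ``linear'' terms $\Phi_{E_j}^{-1}\mathsf{H}_{ji}$ and $\mathsf{H}_{ij}^{*}\Phi_{E_i}^{-1}$ vanish; and the only surviving summands, corresponding to common children of $i$ and $j$, inherit a common matrix factor $D(z^{1/T})$ from the lifting lemma, which is what makes them simultaneously Hermitian and positive semi-definite.

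Concretely, I would first apply the strict two-hop hypothesis: since $i\notin N_{\mathcal{G_T}}(j)$, neither $(i,j)$ nor $(j,i)$ is an edge of $\mathcal{G}$, so $\mathsf{H}_{ij}(z) = \mathsf{H}_{ji}(z) = \boldsymbol{0}$. Equation (\ref{PhiInv_expression}) then collapses to
\begin{align*}
\boldsymbol{B}_j^{\prime}\,\Phi_X^{-1}\,\boldsymbol{B}_i \;=\; \sum_{k=1}^{m} (\mathsf{H}_{kj})^{*}\,\Phi_{E_k}^{-1}\,\mathsf{H}_{ki},
\end{align*}
where a summand can contribute only when $k$ is simultaneously a child of $i$ and of $j$, i.e., a ``spousal witness'' placing $i$ and $j$ at graph distance two. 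Next, I would substitute $\mathsf{H}_{kj}(z) = D(z^{1/T})\,\mathsf{h}_{kj}(z^{1/T})$ together with $\mathsf{h}_{kj}(z) = b_{kj}/\mathsf{S}_k(z)$ from (\ref{eqn:cyclo_LDM}); because $\mathsf{h}_{kj}$ is scalar-valued, each summand factors as
\begin{align*}
\frac{b_{kj}\, b_{ki}}{|\mathsf{S}_k(z^{1/T})|^{2}}\; D(z^{1/T})^{*}\,\Phi_{E_k}^{-1}\,D(z^{1/T}),
\end{align*}
whose scalar prefactor is a nonnegative real number on the unit circle (using $b_{ij}\ge 0$ from the specification following (\ref{eqn:lindyn})) and whose matrix part is a Hermitian congruence of $\Phi_{E_k}^{-1}\succ 0$ (positive definiteness of $\Phi_{E_k}$ comes from topological detectability), hence itself Hermitian PSD.

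Each summand is therefore a nonnegative scalar multiple of a Hermitian PSD matrix, so the sum over $k$ remains Hermitian PSD, which is the desired conclusion. The step I expect to be the main conceptual obstacle is not any individual calculation but rather recognising that the conclusion even has the right form: the off-diagonal block of a Hermitian matrix is generically \emph{not} Hermitian, so the statement $\succeq 0$ is \emph{a priori} at risk of being vacuous. What rescues the argument is that the common lifting factor $D(z^{1/T})$ sandwiches every $\Phi_{E_k}^{-1}$ identically on both sides, producing a genuine congruence and thus Hermiticity and PSD simultaneously; this is a property of the cyclostationary lifting inherited through Lemma 2.1 and its explicit form $\mathsf{H}_{ji}(z) = D(z^{1/T})\mathsf{h}_{ji}(z^{1/T})$, and would fail for an arbitrary LDM.
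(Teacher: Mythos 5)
Your proposal is correct and follows essentially the same route as the paper's proof: specialize (\ref{PhiInv_expression}) using $\mathsf{H}_{ij}=\mathsf{H}_{ji}=\boldsymbol{0}$, reduce the sum to common children, and factor each summand via $\mathsf{H}_{kj}=D(z^{1/T})\mathsf{h}_{kj}(z^{1/T})$ into a nonnegative scalar times a Hermitian congruence of a positive definite spectral density. The only cosmetic difference is that the paper additionally rewrites $\Phi_{E_k}^{-1}=\Phi_{P_k}^{-1}|\mathsf{S}_k|^{2}$ so the scalar $|\mathsf{S}_k|^{-2}$ cancels, whereas you keep $\Phi_{E_k}^{-1}$ directly; both yield the same conclusion.
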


\begin{comment}
\begin{proof}
Given that $i\notin N_{\mathcal{G_T}}(j)$ which implies that ${\mathsf{H}_{ji}(\omega)}={{\mathsf{H^{*}}_{ij}(\omega)}}=\boldsymbol{0}_{T \times T}$. Further, $i \in N_{\mathcal{G_T}}(j,2)$, so $ N_{\mathcal{G_T}}(j,2)\cap N_{\mathcal{G_T}}(j)$ is non-empty. It follows from (\ref{Phi_wji}), (\ref{PhiInv_expression}) through algebraic expansions that $\boldsymbol{\mathsf{W}}_{ji} \preceq 0$, for all $\omega \in [0, 2 \pi)$.
\end{proof}
\end{comment}

\begin{proof}
Given that $i \in N_{\mathcal{G_T}}(j,2)$ and $i\notin N_{\mathcal{G_T}}(j)$. Using (\ref{PhiInv_expression})
\begin{align*}
&{\boldsymbol{B}_{j}^{\prime}}{{\Phi}_{X}^{-1}(\omega)}{\boldsymbol{B}_{i}}=-{{\Phi}_{E_j}^{-1}(\omega)}{\mathsf{H}_{ji}(\omega)}-({\mathsf{H}_{ij}(\omega)})^*{{\Phi}_{E_i}^{-1}(\omega)}+\\
&\ \ \ \ \ \ \ \ \ \ \ \ \ \ \ {\sum_{k \in \mathcal{C}_G(j)\cap \mathcal{C}_G(i)}}(\mathsf{H}_{kj}(\omega))^*{{\Phi}_{E_k}^{-1}(z)}\mathsf{H}_{ki}(\omega)\\
&= -\boldsymbol{0} -\boldsymbol{0} +{\sum_{k \in \mathcal{C}_G(j)\cap \mathcal{C}_G(i)}}{(\mathsf{H}_{kj}(\omega))^{*}}{{\Phi}_{E_k}^{-1}(\omega)}\mathsf{H}_{ki}(\omega)\\
&={\sum_{k \in \mathcal{C}_G(j)\cap \mathcal{C}_G(i)}}[D(\frac{\omega}{T})\mathsf{h}_{kj}(\frac{\omega}{T})]^*{{\Phi}_{E_k}^{-1}(\omega)}D(\frac{\omega}{T})\mathsf{h}_{ki}(\frac{\omega}{T}). \\
&\text{ Using }(\ref{phiE_phiP})\\
&={\sum_{k \in \mathcal{C}_G(j)\cap \mathcal{C}_G(i)}}{D^*(\frac{\omega}{T})\frac{b_{kj}}{\mathsf{S^{*}_k}(\frac{\omega}{T})}}{{\Phi}_{P_k}^{-1}|\mathsf{S_k}(\frac{\omega}{T})|^{2}}D(\frac{\omega}{T})\frac{b_{ki}}{\mathsf{S_k}(\frac{\omega}{T})}\\
&={\sum_{k \in \mathcal{C}_G(j)\cap \mathcal{C}_G(i)}}{(\frac{b_{kj}b_{ki}}{|\mathsf{S_k}(\frac{\omega}{T})|^2})}{|\mathsf{S_k}(\frac{\omega}{T})|^2}[D^*(\frac{\omega}{T}){\Phi}_{P_k}^{-1}(\omega)D(\frac{\omega}{T})]    \\
&={\sum_{k \in \mathcal{C}_G(j)\cap \mathcal{C}_G(i)}}{(b_{kj}b_{ki})}[D^*(\frac{\omega}{T}){\Phi}_{P_k}^{-1}(\omega)D(\frac{\omega}{T})]\\
\end{align*}

It follows from (\ref{phiE_phiP}) that ${{\Phi}_{P_k}}(\omega)$ is also positive definite, thus $
{{\Phi}_{P_k}^{-1}}(\omega) \succ 0$ and $ [D^*(\frac{\omega}{T}){\Phi}_{P_k}^{-1}(\omega)D(\frac{\omega}{T})]\succeq 0$. 

For a common child $k \in \mathcal{C}_G(j)\cap \mathcal{C}_G(i)$, the values of $b_{kj},b_{ki}$ strictly positive. Thus,
\small
\begin{align*}
{\boldsymbol{B}_{j}^{\prime}}{{\Phi}_{X}^{-1}(\omega)}{\boldsymbol{B}_{i}}&\\
={\sum_{k \in \mathcal{C}_G(j)\cap \mathcal{C}_G(i)}}&{(b_{kj}b_{ki})}[D^*(\frac{\omega}{T}){\Phi}_{P_k}^{-1}(\omega)D(\frac{\omega}{T})] \succeq 0, \\ 
&\ \ \ \ \forall \omega \in [0, 2 \pi).
\end{align*}
\normalsize

\begin{comment}
Using (\ref{Phi_wji}), it follows that,
$\boldsymbol{\mathsf{W}}_{ji} \preceq 0, \forall \omega \in [0, 2 \pi)$.
\end{comment}
\end{proof}

\begin{remark}
The consequence of Theorem \ref{3.1} is that if $i$ is a strict two hop neighbor of $j$, then all the eigenvalues of the  matrix ${\boldsymbol{B}_{j}^{\prime}}{{\Phi}_{X}^{-1}(\omega)}{\boldsymbol{B}_{i}}$ are non-negative. This theorem does not guarantee on its converse, but such cases are pathological as it will be evident in next theorem. 
\end{remark}
\begin{theorem}\label{3.2}
Given a well-posed and topologically detectable LDM $(\mathbb{H}(z),{E})$ with associated graph $\mathcal{G}$ and its topology $\mathcal{G_T}$, the following holds:
\begin{enumerate}
 \item Suppose nodes $i$ and $j$ are such that $i \in N_{\mathcal{G_T}}(j)$, $i\not\in N_{\mathcal{G_T}}(j,2)$, ${\boldsymbol{B}_{j}^{\prime}}{{\Phi}_{X}^{-1}}{\boldsymbol{B}_{i}} \succeq 0$ for all $\omega \in [0, 2\pi)$. Then
 
 $[ -{\frac{b_{ji}}{{\mathsf{S}_j^*}(\frac{\omega}{T})}{\Phi}_{P_j}^{-1}(\omega)}D(\frac{\omega}{T})-\frac{b_{ij}}{\mathsf{S}_i(\frac{\omega}{T})}D^*(\frac{\omega}{T}){{\Phi}_{P_i}^{-1}(\omega)}] \succeq 0$
 \item Suppose $i$ and $j$ are such that $i \in N_{\mathcal{G_T}}(j)$ and $i \in N_{\mathcal{G_T}}(j,2)$ (both one hop and two hop neighbor), with ${\boldsymbol{B}_{j}^{\prime}}{{\Phi}_{X}^{-1}}{\boldsymbol{B}_{i}} \succeq 0$ for all $\omega \in [0,2\pi)$. Then $[ -{\frac{b_{ji}}{{\mathsf{S}_j^*}(\frac{\omega}{T})}{\Phi}_{P_j}^{-1}(\omega)}D(\frac{\omega}{T})-\frac{b_{ij}}{\mathsf{S}_i(\frac{\omega}{T})}D^*(\frac{\omega}{T}){{\Phi}_{P_i}^{-1}(\omega)}+ {\sum_{k \in \mathcal{C}_G(j)\cap \mathcal{C}_G(i)}}{(b_{kj}b_{ki})}[D^*(\frac{\omega}{T}){\Phi}_{P_k}^{-1}(\omega)D(\frac{\omega}{T})]] \succeq 0$
\end{enumerate}
\end{theorem}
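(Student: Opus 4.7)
The plan is to take each case of the theorem separately and, in both cases, start from the general expansion (\ref{PhiInv_expression}) of ${\boldsymbol{B}_{j}^{\prime}}{{\Phi}_{X}^{-1}}{\boldsymbol{B}_{i}}$, identify which of its three summands—namely $-{\Phi}_{E_j}^{-1}\mathsf{H}_{ji}$, $-{\mathsf{H}_{ij}}^{*}{\Phi}_{E_i}^{-1}$, and $\sum_{k}(\mathsf{H}_{kj})^{*}\Phi_{E_k}^{-1}\mathsf{H}_{ki}$—are forced to vanish by the topological hypotheses, substitute the parametric forms $\mathsf{H}_{ji}(z)=D(z^{1/T})\,b_{ji}/\mathsf{S}_j(z^{1/T})$ from (\ref{dynamiclink}) together with the relation (\ref{phiE_phiP}) into the surviving terms, and finally transfer the assumed positive-semi-definiteness of ${\boldsymbol{B}_{j}^{\prime}}{{\Phi}_{X}^{-1}}{\boldsymbol{B}_{i}}$ verbatim to the resulting expression. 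The algebraic bookkeeping is exactly the same as in the proof of Theorem \ref{3.1} restricted to the relevant indices, so little new computation is needed.

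For part (1), the condition $i \in N_{\mathcal{G_T}}(j)$ says that at least one of the directed edges $(i,j),(j,i)$ is present in $\mathcal{G}$, so at least one of $b_{ji},b_{ij}$ is nonzero, making one or both of the first two summands in (\ref{PhiInv_expression}) contribute. The condition $i \notin N_{\mathcal{G_T}}(j,2)$ gives $\mathcal{C}_G(j)\cap \mathcal{C}_G(i)=\emptyset$; the terms $k=i$ and $k=j$ in the full sum vanish because $\mathsf{H}_{ii}=\mathsf{H}_{jj}=\boldsymbol{0}$, and every other $k$ would need to be a common child of $i$ and $j$ to contribute. Hence the entire sum drops out, so ${\boldsymbol{B}_{j}^{\prime}}{\Phi_{X}^{-1}}{\boldsymbol{B}_{i}}=-{\Phi}_{E_j}^{-1}\mathsf{H}_{ji}-\mathsf{H}_{ij}^{*}{\Phi}_{E_i}^{-1}$. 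Substituting the parametric forms produces exactly the matrix claimed in (1), and the assumed $\succeq 0$ property carries over directly.

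For part (2), the extra hypothesis $i \in N_{\mathcal{G_T}}(j,2)$ restores the contributions from $\mathcal{C}_G(j)\cap \mathcal{C}_G(i)$ (which may now be non-empty) to the sum in (\ref{PhiInv_expression}), while the first two summands still survive as in part (1). Running the same substitution—with the common-child piece evaluated exactly as in the proof of Theorem \ref{3.1}—yields the full expression stated in (2), and the assumed PSD property of ${\boldsymbol{B}_{j}^{\prime}}{\Phi_{X}^{-1}}{\boldsymbol{B}_{i}}$ transfers to it without further argument.

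The main obstacle, if there is one, is purely bookkeeping: one must keep straight the various Hermitian conjugates of the scalar factors $\mathsf{S}_j(\omega/T)$, the asymmetric placement of the shift matrices $D(\omega/T)$ and $D^{*}(\omega/T)$ in the one-hop terms (one on the right of $\Phi_{P_j}^{-1}$, one on the left of $\Phi_{P_i}^{-1}$), and the identification of the correct index set for the sum. None of this is conceptually difficult given the computation already carried out in the proof of Theorem \ref{3.1}, so the argument is essentially a rearrangement of that calculation under the two new sets of hypotheses, concluded by invoking the PSD assumption on ${\boldsymbol{B}_{j}^{\prime}}{\Phi_{X}^{-1}}{\boldsymbol{B}_{i}}$.
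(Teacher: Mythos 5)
Your proposal follows essentially the same route as the paper's own argument: expand ${\boldsymbol{B}_{j}^{\prime}}{{\Phi}_{X}^{-1}}{\boldsymbol{B}_{i}}$ via (\ref{PhiInv_expression}), use the hypotheses $i\in N_{\mathcal{G_T}}(j)$ and $i\notin N_{\mathcal{G_T}}(j,2)$ (respectively $i\in N_{\mathcal{G_T}}(j,2)$) to decide that the common-child sum vanishes (respectively survives), substitute $\mathsf{H}_{ji}=D(z^{1/T})b_{ji}/\mathsf{S}_j(z^{1/T})$ together with (\ref{phiE_phiP}), and carry the assumed positive semi-definiteness over to the resulting expression, exactly as in the computation of Theorem \ref{3.1}. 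The only point needing care is the one you already flag, namely the bookkeeping of the scalar factors $\mathsf{S}_j^{*}$, $\mathsf{S}_i$ and the placement of $D$, $D^{*}$ so that the surviving terms match the form stated in the theorem.
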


\begin{remark}
The consequence of theorem \ref{3.2} is that for nodes $i$ and $j$ that are neighbors but not two hop neighbors, or, nodes $i$ and $j$ that are neighbors and two hop neighbors, the inequalities in theorem \ref{3.2} holds for all $\omega \in [0,2\pi)$ when the system parameters satisfy a restrictive and specific set of conditions. In other words, aside for pathological cases, the converse of Theorem \ref{3.2} holds. So if the ${\boldsymbol{B}_{j}^{\prime}}{{\Phi}_{X}^{-1}}{\boldsymbol{B}_{i}} \succeq 0$ then $i \in N_{\mathcal{G_T}}(j,2)$. We use this condition to prune out spurious two hop neighbor edges using the algorithm discussed next. 

\end{remark}
\subsection{Topology Learning Algorithm}
The steps involved to unravel the structure of a LDG of cyclostationary process are summarized in Algorithm table $1$. 

\begin{comment}
In practice, the (\ref{scalarOpt}) is implemented by allowing lags upto order $F$ and is formulated below
\small
\begin{align}\label{scalarOpt_reg}
 {\inf_{{\{[\boldsymbol{W_{ji}]_{q,:}}\}}_{i=1,..,m,i\neq j}}}\mathbb{E}\big(x_j(k+q-1)-
	& \sum_{i=1,i\neq j}^{m}[\boldsymbol{[W_{ji}]_{q,:}}*X_i(k)]\big)^2\nonumber\\
\end{align}
\normalsize
where $[\boldsymbol{W_{ji}}]_{pq}=[w_{ji,pq}^{-F},...,w_{ji,pq}^{0},...,w_{ji,pq}^{F}]$, 
 $[\boldsymbol{\mathsf{W_{ji}}(z)}]_{pq} = \sum_{L=-F}^{F}w_{ji,pq}^{L}z^{-L}$. 
\end{comment}

\begin{comment}
\begin{remark}
Note that this algorithm can also be applied by lifting the cyclostationary processes to $nT$-variate WSS process, where $n \in \mathbb{N}$. In particular, when the sample size is low, lifting to $nT$-variate WSS process has the potential to improve the algorithm robustness and accuracy of the topology reconstruction. However, the computational complexity increases due to the increase in the size of matrices ($nT\times nT$ instead of $T\times T$) involved. The involved trade-off will be examined in subsequent work. 
\end{remark}
\end{comment}

\begin{algorithm}
\caption{Learning Algorithm for reconstructing the topology of LDG with cyclostationary inputs}
\textbf{Input:} Nodal time series $x_i(k)$ for each node $i \in \{1, 2,... m\}$ which is WSCS. Thresholds $\rho,\tau$. Frequency points $\Omega$. \\
\textbf{Output:} Reconstruct the true topology with an edge set ${\cal E_T}$ \\
\begin{algorithmic}[1]
\State Perform a periodogram analysis of the time series data to determine the period $T$. Arrange each cyclostationary time series $x_i(k)$ as vector time series $X_i(k)$ of size $T$
\State Define $X(k) = [X_1(k),\cdots X_m(k)]^{'}$
\ForAll{$l \in \{1,2,...,m\}$}\label{step1_a}
\State Compute $\boldsymbol{B}_{l}^{\prime}{{\Phi}_{X}^{-1}}{\boldsymbol{B}_{p}}$ using the nodal time series $\forall p \in \{1,2,...,m\} \setminus {l}$ \label{step1_a1}
\EndFor
\State Edge set $\bar{\cal E}_K \gets \{\}$ 
\ForAll{$l,p \in \{1,2,...,m\}, l\neq p$}
\If{$\|\boldsymbol{B}_{p}^{\prime}{{\Phi}_{X}^{-1}}{\boldsymbol{B}_{l}}\|_{\infty} > \rho$}\label{step_nonzero}
\State $\bar{\mathcal{E}}_K \gets \bar{\mathcal{E}}_K \cup \{(l,p)\}$
\EndIf
\EndFor\label{step1_b}
\State Edge set $\bar{\cal E} \gets \bar{\cal E}_K$ \label{step2_a}
\ForAll{$l,p \in \{1,2,...,m\}, l\neq p$}
\State Compute the eigenvalues $\{{\lambda _{pl}(t)\}}^T_{t=1}$ of the matrix $\boldsymbol{B}_{l}^{\prime}{{\Phi}_{X}^{-1}}{\boldsymbol{B}_{p}}$
\If{$\lambda _{pl}(t)$ $\geq -\tau, \forall \omega \in \Omega, \forall t$}
\State $\bar{\mathcal{E}} \gets \bar{\mathcal{E}} - \{(l,p)\}$
\EndIf
\EndFor \label{step2_b}
\end{algorithmic}
\end{algorithm}

\section{Results}
\subsection{Data generation}
In this section, a generative graph is considered as shown in Fig.(\ref{generativegraph}) along with its moral graph and topology. Each node $i$ is driven by exogenous input $e_i(k)$. The processes $\{e_2(k),e_3(k)\}$ are wide sense stationary and $e_1(k)$ is cyclostationary with $T=2$. There are two directed edges in the generative graph where nodes $\{1,3\}$ are the parents of node $2$. The dynamics of the directed edge from node $1$ to node $2$ is given by $\mathsf{h}_{21}(z) = 0.1\{1+0.9z^{-1}+0.5z^{-2}+0.3z^{-3}\}$. Similarly, $\mathsf{h}_{23}(z) = 0.2\{1+0.9z^{-1}+0.5z^{-2}+0.3z^{-3}\}$. 
The data is generated with the following generative model:
\begin{align}\label{eqn:genmodel}
    x_1(k) &= e_1(k)\\
    x_2(k) &= \mathsf{h}_{21}*x_1(k)+\mathsf{h}_{23}*x_3(k)\\
    x_3(k) &= e_3(k)
\end{align}
The sample size of each time series $\{x_1(k),x_2(k),x_3(k)\}$ is $3\times10^5$. 

\begin{figure}[tb]
	\centering
	\begin{tabular}{c}
	\includegraphics[width=0.8\columnwidth, height = 5 cm]{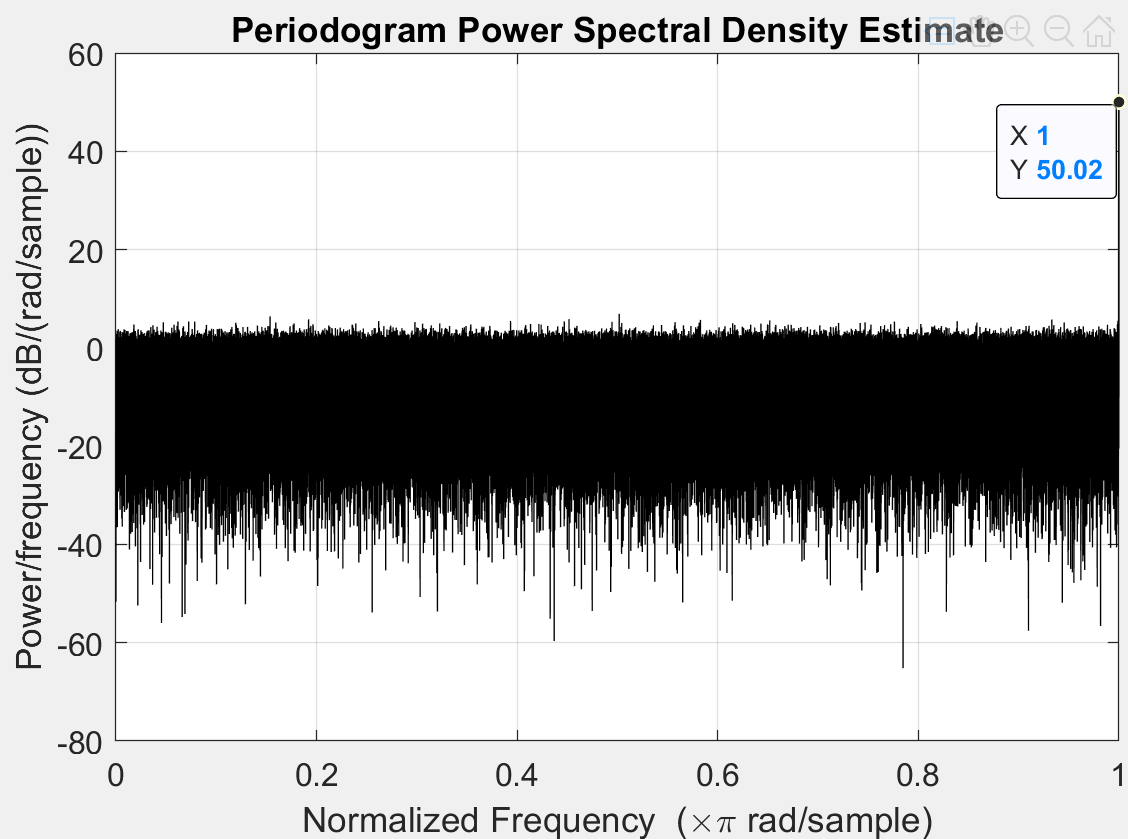}\\
	\end{tabular}
	\caption{Periodogram of time series $x_1(k)$ (peaks at $\pi$, that is $T=2$).
		\label{signal_periodogram}}
\end{figure}
\begin{figure}[tb]
	\centering
	\begin{tabular}{c}
	\includegraphics[width=0.9\columnwidth]{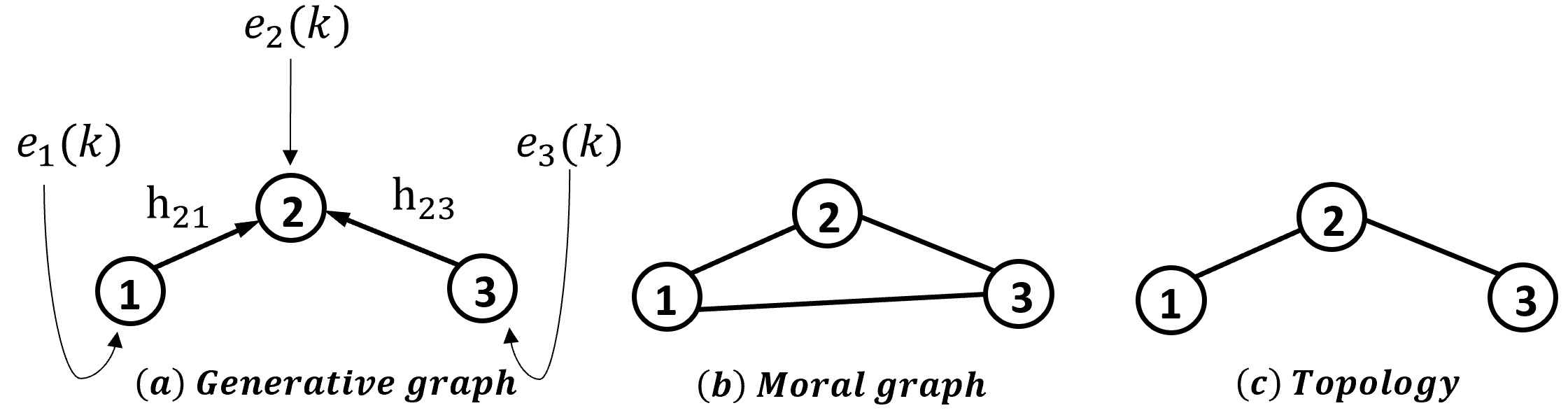}\\
	\end{tabular}
	\caption{(a) Generative graph, (b) Moral graph, (c) Topology.	\label{generativegraph}}
\end{figure}

\begin{figure}[tb]
	\centering
	\begin{tabular}{c}
	\includegraphics[width=1.0\columnwidth]{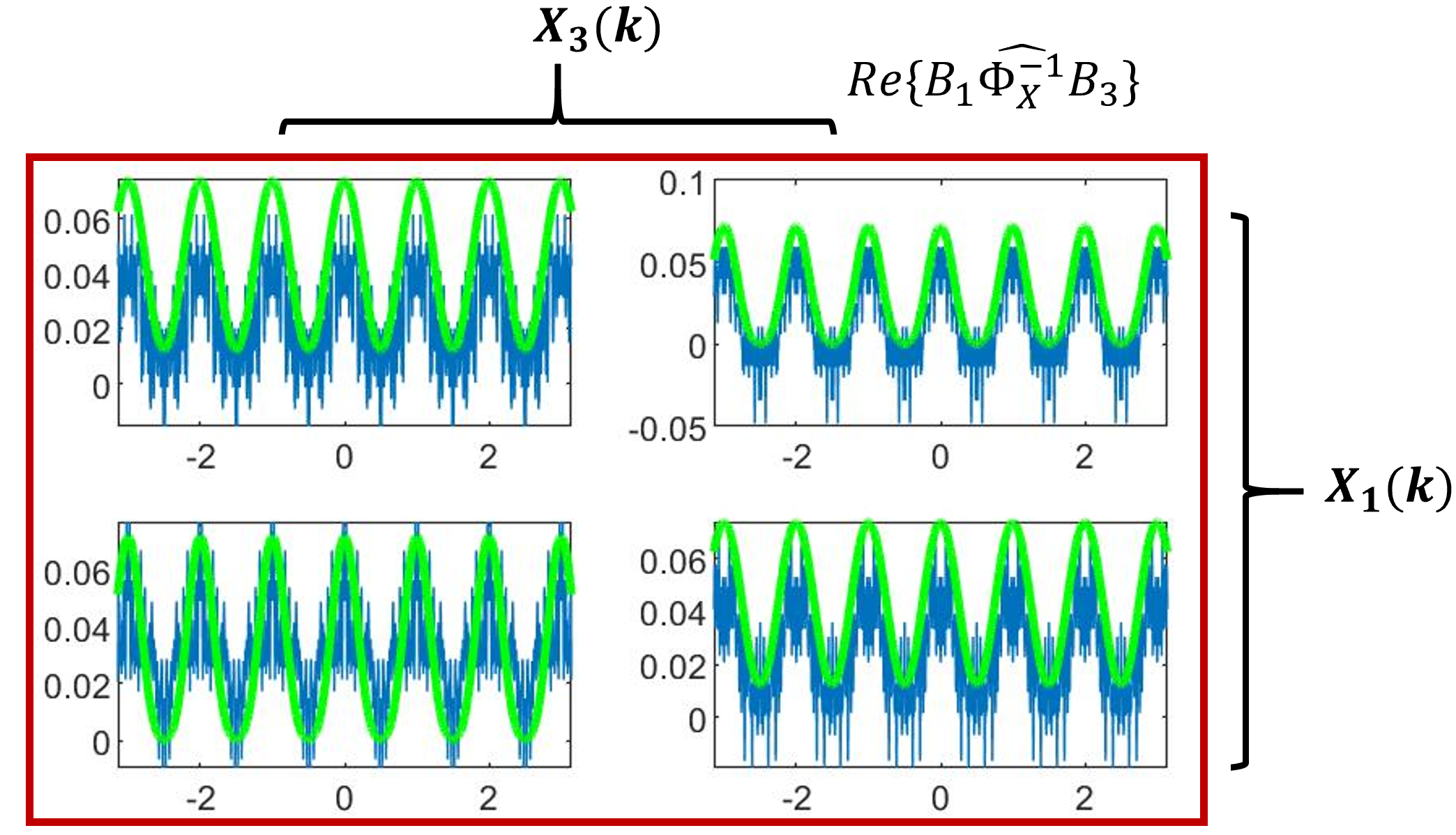}\\
	\end{tabular}
	\caption{Estimate of $\operatorname{Re}({\boldsymbol{B}_{1}^{\prime}}{\hat{\Phi}_{X}^{-1}}{\boldsymbol{B}_{3}})$ (blue) vs frequency $f$. Since we generated the data, the analytical expressions for ${\boldsymbol{B}_{1}^{\prime}}{{\Phi}_{X}^{-1}}{\boldsymbol{B}_{3}}$ is known and plotted in green.	\label{inv_psd}}
\end{figure}

\begin{figure}[tb]
	\centering
	\begin{tabular}{c}
	\includegraphics[width=0.9\columnwidth]{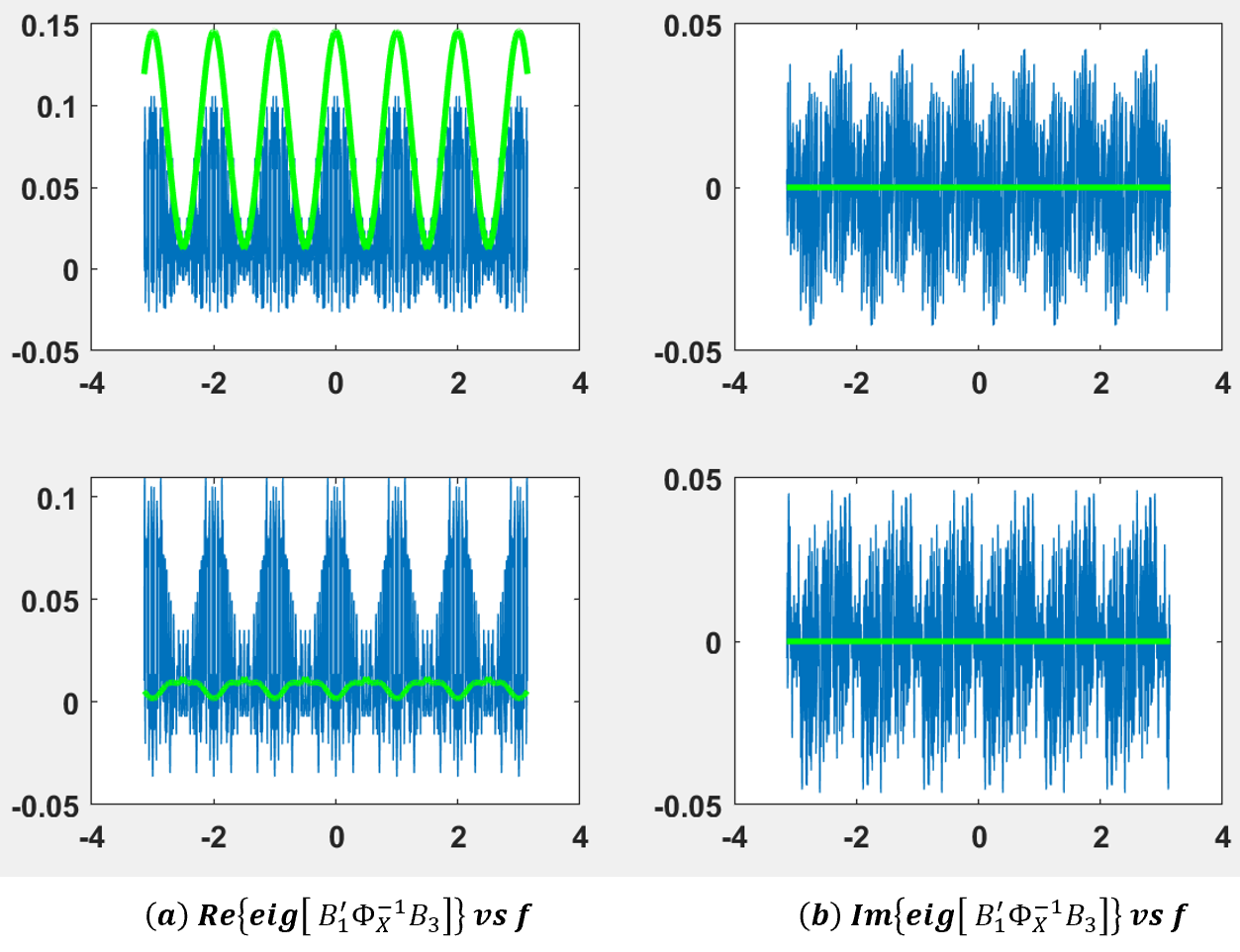}\\
	\end{tabular}
	\caption{Eigenvalues of ${\boldsymbol{B}_{1}^{\prime}}{\hat{\Phi}_{X}^{-1}}{\boldsymbol{B}_{3}}$ vs frequency $f$. Green curve corresponds to \textcolor{green}{true solution} and Blue for \textcolor{blue}{estimated} values using MATLAB.	\label{eigenvalues}}
\end{figure}

\subsection{Topology Reconstruction}
With the data generated in (\ref{eqn:genmodel}), we can apply the presented algorithm to reconstruct the Topology of the Generative graph. The steps involved in reconstructing the Topology are as follows:
\begin{enumerate}
    \item {\it{Identify the time period T}}: Performing the periodogram analysis of the time series $\{x_1(k),x_2(k),x_3(k)\}$, we found the value of $T$ as $2$. For example, the periodogram of $x_1(k)$ is shown in Fig.(\ref{signal_periodogram}).
    \item {\it{Lifting the time series}}: Each time series $x_i(k)$ is lifted to a vector time series $X_i(k)$ of length $T$. That is, $X_i(k)=[x_i(2k), x_i(2k+1)]^T$ for $i=\{1,2,3\}$.
    \item {\it{Compute ${\hat{\Phi}_{X}^{-1}(\omega)}$}}: Using the Welch method \cite{welch1967use}, an estimate of ${{\Phi}_{X}^{-1}(\omega)}$ is computed. This can be done in MATLAB using {\it{cpsd()}}. For example, ${\boldsymbol{B}_{1}^{\prime}}{\hat{\Phi}_{X}^{-1}}{\boldsymbol{B}_{3}}(f)$ is $2\times 2$ matrix and is shown in Fig.\ref{inv_psd}. 
    \item {\it{Moral graph reconstruction}}: We place an undirected edge between two distinct nodes $j$ and $i$ if ${\boldsymbol{B}_{j}^{\prime}}{{\Phi}_{X}^{-1}}{\boldsymbol{B}_{i}}\neq  0$. Repeating this process for all pairs of nodes, the graph that is constructed matches with the moral graph shown in Fig.\ref{generativegraph}(b).
    \item {\it{Topology reconstruction}}: For every edge $(j,i)$ in the reconstructed moral graph, we remove this edge if ${\boldsymbol{B}_{j}^{\prime}}{{\Phi}_{X}^{-1}}{\boldsymbol{B}_{i}}\succeq  0$. The eigenvalues of ${\boldsymbol{B}_{j}^{\prime}}{{\Phi}_{X}^{-1}}{\boldsymbol{B}_{i}}$ are shown in Fig. \ref{eigenvalues}.
    The graph obtained after iterating over all the edges in the moral graph is identical to the topology of the generative graph in Fig.\ref{generativegraph}(c). 
\end{enumerate}
This completes the topology reconstruction of the times series $\{x_1(k),x_2(k),x_3(k)\}$. Since we generated data, we calculated the exact expression of ${{\Phi}_{X}^{-1}(\omega)}$ and compared with the estimates in Fig.\ref{inv_psd} and Fig.\ref{eigenvalues}.

\section{Conclusions}
In summary, the topology of the Linear Dynamic Graph of a Linear Dynamic Model with cyclostationary inputs is constructed from the time series data with provable guarantees. It is a data driven approach useful for exact identification of network structure with applications to power grids, thermal networks, network of rotating mechanical systems amongst many others. Our approach doesn't impose any structural restrictions on the network topology and doesn't use any knowledge of the system parameters. 

In our future work, we will apply this algorithm on experimental data with a wide test cases for the parameters. Regularization methods will be used to improve the performance with less samples and to promote the sparseness. This algorithm can also be used to learn the topology of a network of WSS processes more efficiently than the current state of the art and will be the focus of our future work.
%\bibliography{output.bbl}
%\bibliography{biblio,topident}
% Generated by IEEEtran.bst, version: 1.14 (2015/08/26)

\section{Appendix}
Lemma 2.1

\footnotesize
\begin{proof}
The dynamics of a network of cyclostationary processes are given by
\begin{align*}
\mathsf{x}_i(z) &= \sum_{j=1}^{m}\mathsf{h_{ij}}(z)\mathsf{x}_j(z) + \mathsf{e}_i(z)\\
{x}_i(k) &= \sum_{j=1}^{m}\sum_{n= -\infty}^{\infty}h_{ij}(n){x}_j(k-n) + {e}_i(k)\\
{x}_i(kT) &= \sum_{j=1}^{m}\sum_{n= -\infty}^{\infty}h_{ij}(n){x}_j(kT-n) + {e}_i(kT)\\
{x}_i(kT+p) &= \sum_{j=1}^{m}\sum_{n= -\infty}^{\infty}h_{ij}(n){x}_j(kT+p-n) + {e}_i(kT+p)\\
\end{align*}

where $k$ is the time index and $p\in \{0,1,\dots,T-1\}$. Substitute $n=aT+b$, where $b$ takes the values $\{p,p-1,p-2,\dots,p-T+1\}$
\begin{align*}
{x}_i(kT+p) &= \sum_{j=1}^{m}\sum_{n= -\infty}^{\infty}h_{ij}(n){x}_j(kT+p-n) + {e}_i(kT+p)\\
{x}_i(kT+p) &= \sum_{j=1}^{m}\sum_{a= -\infty}^{\infty}\sum_{b=p-T+1}^{p} h_{ij}(aT+b){x}_j(kT+p-[aT+b]) + {e}_i(kT+p)\\
{x}_i(kT+p) &= \sum_{j=1}^{m}\sum_{b=p-T+1}^{p}\sum_{a= -\infty}^{\infty} h_{ij}(aT+b){x}_j(kT+p-[aT+b]) + {e}_i(kT+p)\\
{x}_i(kT+p) &= \sum_{j=1}^{m}\sum_{b=p-T+1}^{p}\sum_{a= -\infty}^{\infty} h_{ij}(aT+b){x}_j(kT+p-b-aT) + {e}_i(kT+p)\\
\text{Put }t = p-b\\
{x}_i(kT+p) &=\sum_{j=1}^{m}\sum_{t=0}^{T-1}\sum_{a= -\infty}^{\infty} h_{ij}(aT+p-t){x}_j([k-a]T+t) + {e}_i(kT+p)\\
\end{align*}
Lets define $H_{ij,pt}[a] =h_{ij}(aT+p-t)$. The value $p$ is the row index starting from $0$ to $T-1$ and $t$ is the column index starting from $0$ to $T-1$ of the filter matrix $H_{ij}$.
Note that for $p=t$ (diagonal entries) $H_{ij,pp}[n] =h_{ij}(nT)$ for $p\in\{0,1,\dots,T-1\}$.
\begin{align*}
{x}_i(kT+p) &=\sum_{j=1}^{m}\sum_{t=0}^{T-1}\sum_{a= -\infty}^{\infty} H_{ij,pt}[a]{x}_j([k-a]T+t) + {e}_i(kT+p)\\
{x}_i(kT+p) &=\sum_{j=1}^{m}[ H_{ij,p0}*{x}_j(kT)+\dots+H_{ij,p\ T-1}*{x}_j(kT+T-1)]+ \\
&\ \ \ {e}_i(kT+p)\nonumber
\end{align*}

Lets lift the scalar process $x_i(k)$ to a vector process $X_i(k)$ by varying $p$ from $0$ to $T-1$. 
\begin{align*}
    X_i(k) = \begin{bmatrix} x_i(kT)\\x_i(kT+1)\\\vdots\\x_i(kT+T-1)  \end{bmatrix}
\end{align*}
Iterate the $p$ in equation (3) from $0$ to $T-1$ to get the following relation

\begin{align*}
    \begin{bmatrix} x_i(kT)\\x_i(kT+1)\\\vdots\\x_i(kT+T-1)  \end{bmatrix}=&\\\nonumber
    \sum_{j=1}^{m} \begin{bmatrix} H_{ij,00} & H_{ij,01}\ \dots \\ \vdots & \ddots & \\ H_{ij,T-1,0}&       & H_{ij,T-1,T-1}
    \end{bmatrix}&* \begin{bmatrix} x_j(kT)\\x_j(kT+1)\\\vdots\\x_j(kT+T-1)  \end{bmatrix}+\\\nonumber
    &\begin{bmatrix} e_i(kT)\\e_i(kT+1)\\\vdots\\e_i(kT+T-1)  \end{bmatrix}\\
\end{align*}
\begin{align}
    X_i(k) =& \sum_{j=1}^{m}H_{ij}*X_j(k)+E_i(k)\\
    \mathsf{X}_{j}(z) =& \sum_{i=1}^{m}\mathsf{H}_{ji}(z)\mathsf{X}_i(z) + \mathsf{E}_j(z)
\end{align}
where $H_{ij,pt}*x_j(kT+t) = \sum_{a=-\infty}^{\infty} H_{ij,pt}[a]x_j(kT+t-aT)$ for $t\in\{0,1,\dots,T-1\}$ with $H_{ij,pt}[n] =h_{ij}(nT+p-t)$.
This implies $\mathsf{H}_{ji}(z) = \mathcal{Z}[H_{ji}(k)] = D(z^{\frac{1}{T}}) \mathsf{h}_{ji}(z^{\frac{1}{T}})$, where\\
\\
$D(z) = \begin{bmatrix}z^0&z^1\dots&z^{T-1}\\\vdots&\ddots&\vdots\\z^{-(T-1)}&\dots&z^0\\\end{bmatrix}$.
\end{proof}

\end{document}